\newtheorem{definition}{Definition}
\newtheorem{theorem}{Theorem}
\renewcommand{\paragraph}[1]{\vspace*{1mm}\noindent\textbf{#1}}
\newcommand{\ourmech}{Tabular-DDP Mechanism\xspace}
\newcolumntype{M}[1]{D{.}{.}{1.#1}}
\title{Improving Utility for Privacy-Preserving Analysis of Correlated Columns using Pufferfish Privacy}
\begin{abstract}
Surveys are an important tool for many areas of social science research, but privacy concerns can complicate the collection and analysis of survey data. Differentially private analyses of survey data can address these concerns, but at the cost of accuracy---especially for high-dimensional statistics. We present a novel privacy mechanism, the \ourmech, designed for high-dimensional statistics with incomplete correlation. The \ourmech satisfies dependent differential privacy, a variant of Pufferfish privacy; it works by building a causal model of the sensitive data, then calibrating noise to the level of correlation between statistics. An empirical evaluation on survey data shows that the \ourmech can significantly improve accuracy over the Laplace mechanism.
\end{abstract}
\author{Krystal Maughan}
\email{Krystal.Maughan@uvm.edu}
\affiliation{\institution{University of Vermont}\country{USA}}
\author{Joseph P. Near}
\email{jnear@uvm.edu}
\affiliation{\institution{University of Vermont}\country{USA}}
\begin{document}
\settopmatter{printfolios=true}
\maketitle


\section{Introduction}

Survey data remains an important part of research in many different areas, including political science~\cite{sturgis2021demise}. Many survey questions are about political or otherwise personal beliefs or intentions, and individuals will rightfully be concerned if their responses may be made public. This concern even has the potential to reduce participation, which may bias the survey results. To address this problem, survey researchers typically keep their datasets secret in order to protect the privacy of respondents, and take additional steps to protect privacy when revealing aggregate results. These practices make it difficult to share survey data with other researchers, and in spite of the steps taken to protect privacy, respondents often remain concerned about the privacy of their responses.

Differential privacy~\cite{dwork2006calibrating, dwork2014algorithmic} is a strong formal definition of individual privacy, and it has been previously applied to survey data to protect the privacy of respondents~\cite{evans2022differentially}. Differential privacy works by adding noise to results destined for public release. Releasing more results requires adding more noise, because of the potential for correlation between results to reveal more information about a respondent than any single result does on its own. In differential privacy, this principle is called \emph{sequential composition}.

For survey researchers, sequential composition means that the error in the differentially private statistics they release increases with the number of statistics. For summary statistics about per-question responses, the error can grow large for long surveys with many questions.

We propose a novel mechanism for releasing differentially private statistics, the \ourmech, that can significantly improve error for releases of multiple statistics---including summary statistics about survey results. The key insight of the \ourmech is that a single respondent's answers to different survey questions are \emph{not} necessarily 100\% correlated, so the amount of noise required to use sequential composition is larger than necessary.

The \ourmech works by building an approximate causal model of the distribution underlying the collected survey data, then using the model to estimate correlations between statistics in the desired data release. The mechanism leverages incomplete correlations (and independence) to reduce the amount of noise required, based on a relaxed privacy definition called dependent differential privacy~\cite{liu2016dependence}.

In this paper, we formalize the \ourmech and prove that it satisfies dependent differential privacy. Then, we apply the \ourmech to real-world survey data from the American National Election Studies (ANES). We conduct an empirical evaluation of the accuracy of the \ourmech; the results suggest that the \ourmech can improve accuracy for summary statistics for this kind of survey data by several times in comparison to the standard Laplace mechanism (with sequential composition).

\paragraph{Contributions.} We make the following contributions:
\begin{itemize}[leftmargin=14pt]
\item We initiate the study of optimal mechanisms for differentially private summary statistics for survey results, based on the insight that responses are not completely correlated
\item We define the \ourmech, a novel dependent differential privacy mechanism designed for incompletely-correlated high-dimensional statistics
\item We evaluate the \ourmech experimentally using real survey data to demonstrate its accuracy benefit
\end{itemize}

\section{Background}

\subsection{Survey Data}

\begin{figure}
  \begin{mdframed}
    \begin{enumerate}[leftmargin=7pt]
    \item First, how much do you think people can change the kind of
      person they are? \\ \textit{$\;\;\;\bigcirc$ Completely $\;\;\;\bigcirc$ A
        lot $\;\;\;\bigcirc$ A moderate amount \\ $\;\;\;\bigcirc$ A little
        $\;\;\;\bigcirc$ Not at all}\\[1pt]
    \item If you wanted to defend an opinion of yours, how
      successfully do you think you could do that?\\
      \textit{$\;\;\;\bigcirc$ Extremely successfully $\;\;\;\bigcirc$ Very
        successfully \\$\;\;\;\bigcirc$ Moderately successfully $\;\;\;\bigcirc$
        Slightly successfully \\$\;\;\;\bigcirc$ Not successfully at all?}
    \end{enumerate}
  \end{mdframed}
  \caption{Example questions and responses from the ANES 2006 Survey.
    The survey has a total of 72 questions.}
  \label{fig:example_survey}
\end{figure}

The motivating use case for our work is privacy in survey data. Such data is collected by posing survey questions like the examples in Figure~\ref{fig:example_survey} to individuals, and aggregating and analyzing the responses. To protect privacy, the responses themselves are typically kept secret; even summary statistics about the responses are often not released publicly, because they could potentially reveal information about individual respondents.

The standard approach for protecting privacy in survey data is \emph{de-identification}: the removal of \emph{personally identifiable information} (PII) like names and phone numbers~\cite{connors2019transparency, plutzer2019privacy} before sharing the data. However, de-identification approaches do not always fully protect privacy: they are frequently subject to re-identification attacks~\cite{henriksen2016re}, which recover the removed PII. In addition, aggressive de-identification can remove useful information from the data, reducing utility.

We focus on producing privacy-preserving histograms of response counts for each question (i.e. for each question, how many respondents chose each possible response for that question), with a formal privacy guarantee. Based on this goal, the number of statistics we want to release grows linearly with the number of questions in the survey.

\subsection{Differential Privacy}

Differential privacy~\cite{dwork2006calibrating, dwork2014algorithmic}
is a formal privacy definition based on the notion of
indistinguishability. Informally, for every hypothetical individual
who \emph{could} contribute data for an analysis, differential privacy
ensures that the analysis results will not reveal whether or not the
individual \emph{did} contribute data.
\begin{definition} [Differential Privacy]
  A randomized \emph{mechanism} $\mathcal{M}$ satisfies
  $(\epsilon,\delta)$-differential privacy if, for all datasets $D$
  and $D'$ that differ in the data of one individual, and all possible
  sets of outcomes $S$:
  \[
    \Pr[\mathcal{M}(D)\in \mathcal{S}] \leq e^{\epsilon}
    \Pr[\mathcal{M}(D')\in \mathcal{S}] + \delta
  \]
\end{definition}
Differential privacy is \emph{compositional}: if $\mathcal{M}_1$
satisfies $(\epsilon_1, \delta_1)$-differential privacy, and
$\mathcal{M}_2$ satisfies $(\epsilon_2, \delta_2)$-differential
privacy, then releasing the results of both mechanisms satisfies
$(\epsilon_1 + \epsilon_2, \delta_1 + \delta_2)$-differential privacy.
Differential privacy is closed under \emph{post-processing}: if
$\mathcal{M}$ satisfies $(\epsilon, \delta)$-differential privacy,
then $f \circ \mathcal{M}$ satisfies $(\epsilon, \delta)$-differential
privacy for any function $f$.

Differential privacy is defined in terms of \emph{neighboring
  databases} that differ in the data of one individual. The formal
specification of this idea makes a big difference to the privacy
guarantee obtained in practice. The standard
approach~\cite{dwork2014algorithmic} is to assume that each individual
contributes exactly one row to the database, so the \emph{distance}
between two databases is equal to the number of rows on which they
differ.
When one individual may contribute multiple rows, a different distance
metric must be used to ensure privacy.

To achieve differential privacy, we can add noise as prescribed by one
of several basic mechanisms. The two most commonly-used mechanisms are
the \emph{Laplace mechanism}, which ensures pure
$\epsilon$-differential privacy, and the \emph{Gaussian mechanism},
which ensures $(\epsilon, \delta)$-differential privacy. In both
cases, the scale of the noise is determined by the query's
\emph{sensitivity}, which measures the influence of a single
individual's data on the query's output. The \emph{$L1$ sensitivity}
of a function $f:\mathcal{D} \rightarrow \mathbb{R}^k$ is defined as
follows, where $d$ is a distance metric on databases:
\[ \Delta_1 f = \max_{D, D' . d(D, D') \leq 1} \lVert f(D) - f(D') \rVert_1 \]
The \emph{$L2$ sensitivity} $\Delta_2 f$ is defined the same way, but
with the $L2$ norm instead of the $L1$ norm.

\begin{theorem}[The Laplace Mechanism] Given a numeric query
  $f:\mathcal{D} \rightarrow \mathbb{R}^k$, the Laplace mechanism adds
  to the query answer $f(D)$ with a vector $(\eta_1,\cdots,\eta_k)$,
  where $\eta_i$ are i.i.d. random variables drawn from the Laplace
  distribution centred at 0 with scale $b=\Delta_1 f/\epsilon$,
  denoted by $Lap(b)$. The Laplace mechanism preserves
  $(\epsilon,0)$-differential privacy.
\end{theorem}


\subsection{Dependent Differential Privacy}

Sometimes, \emph{correlations may exist between individuals} that
allow an adversary to make inferences about one individual based on
the data of another. Consider, for example, a dataset of GPS locations
that includes members of a chess club. If the chess club meets at 3pm
on Thursdays, then the locations of the club's members at that time
will be highly correlated with one another! The adversary may be able
to learn the \emph{most popular} location of chess club members during
the meeting time, and then \emph{infer}, based on their belief about
correlations in the data, that an \emph{individual} chess club member
is highly likely to have been at the popular location. In this case,
the correlation in the data enabled the inference: absent the
knowledge that chess club members are likely to be in the same
location during the meeting time, the adversary would not be able to
make the inference.

Importantly, \emph{differential privacy does not promise to prevent
  this inference}. Arguably, it is not a privacy violation at all.
However, in some cases such inferences are highly likely to reveal
information that may prove harmful, so a significant body of work has
investigated ways of refining the definition of differential privacy
to account for this risk~\cite{song2017pufferfish, liu2016dependence,
  niu2019making, kessler2015deploying, liang2020pufferfish,
  zhang2022attribute}.

The most important for our setting is \emph{dependent differential
  privacy}, due to Liu et al.~\cite{liu2016dependence}. Dependent
differential privacy can be seen as a strengthening of differential
privacy, which reduces to differential privacy when no correlations
are present in the data. Dependent differential privacy is defined as
follows:
\begin{definition}[Dependent Neighboring Databases]
  Two databases $D(L, \mathcal{R})$ and $D'(L, \mathcal{R})$ are
  dependent neighboring databases if the modification of a tuple value
  in database $D(L, \mathcal{R})$ causes a change in at most $L-1$
  other tuple values in $D'(L, \mathcal{R})$ due to the probabilistic
  dependence relationship $\mathcal{R}$ between the data tuples.
\end{definition}
\begin{definition}[Dependent Differential Privacy]
  A randomized mechanism $\mathcal{M}$ satisfies $(\epsilon,
  \delta)$-dependent differential privacy if for all pairs of
  dependent neighboring databases $D(L, \mathcal{R})$ and $D'(L,
  \mathcal{R})$ and all possible sets of outcomes $S$:
  \[
    \Pr[\mathcal{M}(D(L, \mathcal{R}))\in \mathcal{S}] \leq e^{\epsilon}
    \Pr[\mathcal{M}(D'(L, \mathcal{R}))\in \mathcal{S}] + \delta
  \]
\end{definition}
This definition is designed to capture inferences made on the
dependence relationship $\mathcal{R}$ while preserving important
properties of differential privacy. Like differential privacy,
dependent differential privacy is compositional and closed under
post-processing.

Liu et al.~\cite{liu2016dependence} propose a definition of
\emph{dependent sensitivity} that allows the use of the Laplace
mechanism to satisfy dependent differential privacy. Dependent
sensitivity is large when significant correlations in the data could
enable inferences like our earlier example, and is equal to $L1$
sensitivity when no correlations exist.
\begin{definition}[Dependent sensitivity~\cite{liu2016dependence}]
  The dependent sensitivity of a query $Q$ with $L1$ sensitivity
  $\Delta Q$ is:
  \[ DS^Q = \sum_{j = C_{i1}}^{C_{iL}} \rho_{i j} \Delta Q \]
  Where $\rho_{i j}$ represents the \emph{dependence coefficient}
  between records $i$ and $j$.
\end{definition}

\section{Privacy for Survey Data}

Differential privacy assumes complete correlation between the
attributes of a single individual, and so releasing statistics about
multiple columns of a tabular dataset requires the use of sequential
composition. The key insight of our approach is the observation that
complete correlation often \emph{does not} exist between attributes,
so the use of sequential composition provides very loose upper bounds
on the actual privacy loss for these statistics.

We propose the use of a dependent differential privacy mechanism for
releasing statistics about multiple attributes in tabular data,
including survey data. Under valid assumptions about the distribution
of the underlying data, dependent differential privacy provides strong
privacy protection for participants in the dataset---but with less
noise required.

The primary challenges lie in modeling correlations between columns
and in efficiently calculating the dependent sensitivity of queries
over the data based on these models.


\begin{figure}
  \centering
  \begin{tabular}{c c c}
    \begin{minipage}{90pt}
      \centering
  {\footnotesize
    \begin{tabular}{|c |c| c|}
      \hline
      Prize & First & Monty \\
      Door & Selection & Opens \\
      \hline
      1 & 1 & 2 \\
      3 & 2 & 1 \\
      3 & 3 & 2 \\
      \hline
    \end{tabular}
  }
  \textbf{Original Table}
  \end{minipage}
  & $\Rightarrow$ &
  \begin{minipage}{60pt}
    \centering
    {\footnotesize
    \begin{tabular}{|c|}
      \hline
      Prize Door\\
      \hline
      1\\
      3\\
      3\\
      \hline
      First Selection \\
      \hline
      1\\
      2\\
      3\\
      \hline
      Monty Opens \\
      \hline
      2\\
      1\\
      2\\
      \hline
    \end{tabular}
  }
  \textbf{Transformed Table}
  \end{minipage}\\
  \end{tabular}

  \caption{Example Tabular Data: Records of Monty Hall Games.}
  \label{fig:monty_data}
\end{figure}

\subsection{Example: Monty Hall}

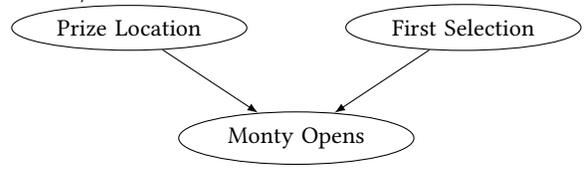
\begin{figure}
  \centering
\begin{tikzpicture}[
  node distance=1cm and 0cm,
  mynode/.style={draw,ellipse,text width=2cm,align=center}
]
\node[mynode] (sp) {Prize Location};
\node[mynode,below right=of sp] (gw) {Monty Opens};
\node[mynode,above right=of gw] (ra) {First Selection};
\path (sp) edge[-latex] (gw) 
(gw) edge[latex-] (ra);
\end{tikzpicture}

\caption{Bayesian Network for the Monty Hall Problem}
\label{fig:bayesian_network}
\end{figure}

As a simple example of our setting, consider the Monty Hall
problem. The problem describes a game involving a
contestant, a host (Monty Hall), and three doors. One door contains a
goat, one contains a prize, and one is empty; the contestant's goal is
to choose the door with the prize. The game proceeds in three steps:

\begin{enumerate}
\item The contestant chooses a door (the ``First Selection'').
\item Monty opens a door that is \emph{neither} the ``First
  Selection'' \emph{nor} the door with the prize (revealing either the
  goat or nothing at all).
\item The contestant is given the opportunity to change their
  selection to the other non-open door, or keep their first selection.
\item The contestant's final selection is opened. If the door contains
  the prize, the contestant wins.
\end{enumerate}

The Bayesian network corresponding to the Monty Hall problem appears
in Figure~\ref{fig:bayesian_network}. This problem is famous for being
counterintuitive---we assume that the event of Monty opening one of
the doors does not affect the probability that the contestant has made
the right choice, but in fact it does! This effect is encoded in the
Bayesian network: which door Monty opens depends on both the location
of the prize \emph{and} the contestant's first selection.

Imagine we have collected observations of Monty Hall games, as in
Figure~\ref{fig:monty_data}, and we would like to release statistics
about these games under differential privacy. We can release
histograms for all three attributes summarizing the game outcomes, and
add Laplace noise with scale $\frac{1}{\epsilon}$ to each one.
By the sequential proposition property of differential privacy, the
total privacy cost is $3\epsilon$. Note that it is not possible to use
parallel composition in this case, because adding or removing a whole
row of data changes the results of all three histograms.

\subsection{Modeling Correlations}
\label{sec:model-corr}

Calculating dependent sensitivity requires the ability to evaluate the
probability that an attribute takes a particular value given the
values of the other attributes in the same row. We model these
correlations using a Bayesian network, in a similar way to
previous work~\cite{liu2016dependence, song2017pufferfish}.

\begin{figure}
  \centering

\begin{tikzpicture}[
  node distance=.6cm and .5cm,
  mynode/.style={draw,ellipse,text width=1cm,align=center}
]
\node[mynode] (3) {V06P431};
\node[mynode, below=of 3] (4) {V06P432};
\node[mynode, below left=of 4] (5) {V06P433};
\node[mynode, below right=of 4] (6) {V06P434};
\node[mynode, below=of 5] (8) {V06P510};
\node[mynode, below=of 6] (7) {V06P505};
\path (3) edge[-latex] (4) 
(4) edge[-latex] (5)
(4) edge[-latex] (6)
(5) edge[-latex] (8)
(6) edge[-latex] (7);
\end{tikzpicture}

\caption{Example Bayesian network for a subset of ANES 2006 Survey
  Data. Each node represents one column in the original dataset; each
  edge is associated with a conditional probability table encoding the
  conditional dependencies between column values.}
  \label{fig:ex_network}
\end{figure}
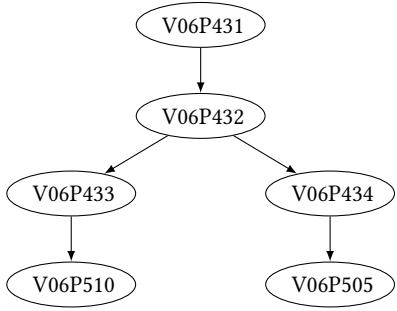

A Bayesian network is a graphical model (directed acyclic graph) that
represents conditional dependencies between variables. In our setting,
each column of the dataset is represented by a variable in the
Bayesian network (i.e. a node in the graph) and the conditional
probability table associated with each edge in the graph encodes the
conditional dependencies between column values.

We represent a Bayesian network learned from the dataset using a
triple $(V, E, P)$, where $V$ and $E$ are the vertices and edges of
the graph, respectively, and $P$ is the conditional probability table.
For every pair of attributes $X_1, X_2 \in V$, if $X_1$ is
conditionally dependent on $X_2$, then an edge $(X_1, X_2) \in E$ will
connect them, and the conditional probability table will record the
corresponding conditional probability distribution: for every possible
value $v_1$ and $v_2$ that attributes $X_1$ and $X_2$ could take,
$P(X_1=v_1, X_2=v_2) = \Pr[X_2 = v_2 \mid X_1 = v_1]$.

In a survey, we expect that the attributes of a single individual's
results will be correlated with each other. We model the extent of
this correlation using a Bayesian network, so that we can apply
mechanisms for dependent differential privacy (described in
Section~\ref{sec:depend-sens-tabul}).

\subsection{Learning the Model}

The major challenge of this approach is defining the Bayesian network
itself. Previous work has assumed that the network is already known,
and is public information~\cite{liu2016dependence, song2017pufferfish}
(and often, that it has a specific form---e.g. a Markov chain).

Our approach is to learn the Bayesian network from the data itself.
Learning the structure of Bayesian networks from data is a challenging
but well-studied problem~\cite{scanagatta2019survey,
  tsamardinos2006max}; our implementation uses the Pomegranate library
for Python.

An example Bayesian network learned from a subset of the columns of
the ANES 2006 Survey dataset appears in Figure~\ref{fig:ex_network}.
Approaches for learning structures like these do not scale well to
large networks (e.g. hundreds of attributes---as is common in
surveys). In order to make the model-learning component of our
approach tractable, we split the attributes into smaller chunks (in
our evaluation, we include 10 attributes per chunk), and learn a model
for just the attributes in each chunk. Then, to provide privacy for
the whole response, we add noise to each chunk separately and use the
sequential composition property to determine the total privacy loss.

\subsection{Privacy Considerations}
\label{sec:priv-cons}

The approach we have outlined raises several important concerns about
the real-world privacy we can expect from the guarantee. First,
Pufferfish privacy and its variants (including dependent differential
privacy) represent weaker guarantees than $\epsilon$-differential
privacy; in the context of survey data, the weakening of the guarantee
is similar to the difference between node- and edge-level privacy in
graphs~\cite{kasiviswanathan2013analyzing}. In our setting:
\begin{itemize}
\item \textbf{$\epsilon$-differential privacy} protects the presence
  or absence of \emph{one individual} in the survey results
\item \textbf{$\epsilon$-dependent differential privacy} protects the
  presence or absence of \emph{one answer to a survey question} in the
  survey results
\end{itemize}
The difference between these guarantees is significant, and our weaker
guarantee may not be applicable in some cases. In cases where survey
answers may be sensitive, but participation in the survey is not, the
dependent differential privacy guarantee may be appropriate, and
enable better utility in the results.

Second, learning a Bayesian network from the sensitive data presents
two additional concerns: (1) the model's structure may reveal
properties of the underlying distribution (e.g. enabling attribute
inference), and (2) the model's structure may reveal properties of
individual records in the data (enabling inferences about
individuals). In our setting, (1) is not a major concern, since the
underlying distribution of responses is what we would like to learn.

However, concern (2) is an issue in our setting. It is possible that
learning the Bayesian network from the data could reveal information
specific to individuals---though in large datasets, this information
is likely to be minimal. To alleviate this issue, a differentially
private learning algorithm could be used~\cite{zhang2017privbayes}.

An additional concern is that the learning process could produce a
model that does not actually match the underlying
distribution---either because the learning process fails to learn the
correct model, or because the data does not represent the underlying
distribution very well. In this case---as in other applications of
Pufferfish privacy---unexpected privacy failures could occur due to
the mismatch between \emph{expected} and \emph{actual} correlations in
the data.

All of these concerns represent limitations of our approach, and are
important areas for future improvement.



\section{Dependent Sensitivity for Tabular Data}
\label{sec:depend-sens-tabul}

  

  


\begin{algorithm}[t]
  \SetKwData{count}{count}
  \SetKwData{Lap}{Lap}
  \SetKwData{noisyCount}{noisyCount}
  \SetKwData{total}{total}
  \SetKwData{pr}{Prob}
  \SetKwInOut{Input}{Input}
  \SetKwInOut{Output}{Output}
  
  \let\oldnl\nl
  \newcommand{\nonl}{\renewcommand{\nl}{\let\nl\oldnl}}

  \Input{\hspace{1pt}Database $D$ with $n$ columns, query $Q$ to be
    run on each column, chunk size $k$, privacy parameter $\epsilon$}
  
  \Output{\hspace{1pt}Privacy-preserving statistics for each column}

  $\{D_1, \dots, D_{\lfloor {n}/{k} \rfloor} \} \leftarrow \textsc{SplitColumns}(D, k)$\\
  \For{$D_i \in \{D_1, \dots, D_{\lfloor {n}/{k} \rfloor} \}$}{
  $(V, E, P) \leftarrow \textsc{LearnNetwork}(D_i)$\\
  \For{$(X_i, X_j) \in E$}{
    $\rho_{i,j} \leftarrow \max_{d_j, d_{i_1}, d_{i_2}} 
    \log \Big( \frac{\Pr[X_i = d_{i_1}, X_j = d_j]} {\Pr[X_i = d_{i_2}, X_j = d_j]} \Big)$
  }
  $DS \leftarrow {\sum_{i,j} \rho_{i,j}}$
  \hfill \textit{calculate dependent sens.}\\
  \For{$X_i \in \textit{columns}(D_i)$}{
    $R_i \leftarrow Q(X_i) + \textsf{Lap}\Big( \frac{n DS}{k \epsilon} \Big)$
    \hfill \textit{calculate noisy result}\\
  }
  }
  \Return{$\textbf{R}$}\\[10pt]
  \nonl $\textsc{SplitColumns}(D, k)$ splits dataset $D$ column-wise into chunks, so that each chunk has at most $k$ columns.\\
  \nonl $\textsc{LearnNetwork}(D_i)$ learns a causal model for dataset $D_i$, expressed as a Bayesian network.\\
  \caption{The Tabular-DDP Mechanism.}
  \label{alg:mechanism}
\end{algorithm}

This section describes the \emph{\ourmech}, formalized in Algorithm~\ref{alg:mechanism}, which adapts the dependent sensitivity approach of Liu et al.~\cite{liu2016dependence} to the setting of multi-attribute tabular data.

\paragraph{Transforming the data.}
We adopt the definition of dependent sensitivity from Liu et al., as
defined earlier. To scale noise to dependent sensitivity, we need the
data to be represented in the form $X = \{X_1, \dots, X_n\}$, where we
assume that the attributes of each $X_i$ may be \emph{completely}
dependent on one another, and there may \emph{additionally} be
correlations between two tuples $X_i$ and $X_j$.

To fit these assumptions, we transform the tabular representation of
our data table into a single-column table, as shown in
Figure~\ref{fig:monty_data}, by concatenating the columns. After this
transformation, each tuple has only a single attribute, and the domain
of that attribute is the product of the table's original attributes.

The transformed data fits the assumptions of dependent sensitivity. In
the new representation, which has only a single column, the Bayesian
network in Figure~\ref{fig:bayesian_network} encodes correlations
between \emph{rows} rather than columns, as expected for dependent
sensitivity.

\paragraph{Calibrating noise to dependent sensitivity.}
With the transformed data, it is possible to apply the mechanisms of
Liu et al. directly:
\begin{enumerate}
\item Transform the tabular data to a single-column representation
\item Add Laplace noise to the results of querying the transformed
  data, scaled to the dependent sensitivity of the query
\end{enumerate}
Next, we introduce a slight modification to the mechanism that avoids
the need for explicit transformation of the data.

\paragraph{The \ourmech.}
The \ourmech, defined in Algorithm~\ref{alg:mechanism}, simulates the
process described above, and scales the additive Laplace noise to the
\emph{effective} dependent sensitivity of applying a query to multiple
attributes of a tabular dataset in parallel. 

First, the mechanism splits the dataset into chunks column-wise (line
1), to make the modeling task computationally tractable. Next, for
each chunk, the mechanism learns a Bayesian network encoding the
causal relationships in the data (line 3). The \textsc{LearnNetwork}
function refers to an off-the-shelf tool for learning the network and
returning a representation containing the conditional probability
table, as described earlier (Section~\ref{sec:model-corr}). The larger
the number of columns $k$ in each chunk, the more computationally
challenging this task is. Then, the mechanism computes the effective
dependent sensitivity by summing the dependence coefficients for all
attributes in the table (line 5). Here, the mechanism uses the
conditional probability table in the learned Bayesian network to
calculate the probability ratio:
\[\frac{\Pr[X_j = d_j | X_i = d_{i_1}]} {\Pr[X_j = d_j | X_i = d_{i_2}]} \]
Finally, the mechanism releases the
result of running the query $Q$ and adding Laplace noise scaled to the
dependent sensitivity (line 8).
Like the process defined above, the \ourmech satisfies
$\epsilon$-dependent differential privacy, as long as the learned
Bayesian network accurately represents the underlying data
distribution. For each column-wise chunk of the dataset, the mechanism
satisfies $\lfloor \frac{n}{k} \rfloor \epsilon$-dependent
differential privacy, for a total privacy cost bounded by
$\epsilon$-dependent differential privacy by sequential composition.

\paragraph{Privacy.}
To prove privacy for the \ourmech, we will view the dataset implicitly
in the single-column representation described above (with correlations
between tuples, rather than columns) and leverage the privacy result of Liu et al.~\cite{liu2016dependence}:
\begin{lemma}[Liu et al.~{\cite[Theorem 8]{liu2016dependence}}]
  The dependent sensitivity for publishing any query $Q$ over a
  dependent (correlated) dataset is
  \[DS^Q = \max_i DS_i^Q\]
  \label{lem:dep_sens}
\end{lemma}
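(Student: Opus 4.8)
The plan is to recognize $DS^Q$ as a worst-case quantity and to show that this worst case is determined by the single record whose modification induces the largest total change in the query output. First I would unfold the definition of dependent sensitivity as a maximum: over all pairs of dependent neighboring databases $D(L,\mathcal{R})$ and $D'(L,\mathcal{R})$, it is the largest value of $\lVert Q(D) - Q(D')\rVert_1$. By the definition of dependent neighboring databases, any such pair arises by modifying the value of a single tuple---say the $i$-th---which, through the dependence relationship $\mathcal{R}$, forces a change in at most $L-1$ other tuples, indexed $C_{i1}, \dots, C_{iL}$.

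Next I would fix the identity of the modified tuple $i$ and bound the resulting change in the query output. Using subadditivity of the $L1$ norm, the total change decomposes into a sum of contributions, one for each affected tuple $j \in \{C_{i1}, \dots, C_{iL}\}$. Each contribution is bounded by the base sensitivity $\Delta Q$ scaled by the dependence coefficient $\rho_{ij}$, which quantifies how strongly a change in tuple $i$ propagates to tuple $j$. Summing these bounds yields exactly the per-record dependent sensitivity $DS_i^Q = \sum_{j=C_{i1}}^{C_{iL}} \rho_{ij}\,\Delta Q$, so modifying tuple $i$ can change the output by at most $DS_i^Q$.

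Finally, since a dependent neighboring pair may be formed by modifying \emph{any} single tuple as the source of the change, and dependent sensitivity is the worst case over all such pairs, I would take the maximum over the choice of modified tuple to conclude $DS^Q = \max_i DS_i^Q$. This mirrors the structure of ordinary $L1$ sensitivity, which is itself a maximum over neighboring databases; the only new ingredient is that the notion of ``neighboring'' now ranges over the induced multi-tuple changes, indexed by which tuple is the origin of the modification.

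The main obstacle is the middle step---rigorously justifying that the total propagated change from modifying tuple $i$ is bounded by $\sum_j \rho_{ij}\,\Delta Q$. This requires a precise model of how a single modification propagates along the edges of the correlation structure $\mathcal{R}$, together with a bound showing that each induced per-tuple change contributes at most $\rho_{ij}\,\Delta Q$ to the $L1$ distance. The definition of the dependence coefficient $\rho_{ij}$ must be exactly the quantity that makes this per-tuple bound hold, so the crux is verifying that the $\rho_{ij}$ used in the mechanism captures precisely the worst-case induced change.
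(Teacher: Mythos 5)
You should first note that the paper contains no proof of this lemma at all: it is imported verbatim from Liu et al.\ (their Theorem~8), so the standard of comparison is that external argument, and measured against it your proposal has a genuine gap---precisely the one you flag yourself at the end. Your steps (1) and (4), unfolding the sensitivity as a worst case over dependent neighboring pairs and maximizing over which tuple originates the modification, are essentially definitional and fine. The failure is in steps (2)--(3): you cannot obtain a factor $\rho_{ij}<1$ from a deterministic, triangle-inequality decomposition of $\lVert Q(D)-Q(D')\rVert_1$. In any \emph{realized} pair of dependent neighboring databases, an affected tuple $j$ either changed or it did not; if it changed, its contribution to the $L1$ distance can be the full $\Delta Q$, so a worst-case norm argument can only yield the trivial bound $L\,\Delta Q$, i.e.\ it forces $\rho_{ij}=1$ and collapses dependent sensitivity to ordinary group sensitivity, defeating the entire point of the definition. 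The discount $\rho_{ij}$ is inherently probabilistic---note that in this paper's Algorithm~1 it is computed as a maximal log-ratio of conditional probabilities $\Pr[X_j=d_j \mid X_i=d_{i_1}]/\Pr[X_j=d_j \mid X_i=d_{i_2}]$---and it only becomes available in an argument about the \emph{output distribution} of the noisy mechanism: one conditions on the two values of the modified tuple $i$, marginalizes the Laplace density over the conditional law of the dependent tuples given each value, and bounds the resulting likelihood ratio by $e^{\epsilon}$ when the noise is scaled to $DS_i^Q=\sum_j \rho_{ij}\,\Delta Q_j$. That Pufferfish-style computation is the substance of Liu et al.'s Theorem~8 and has no analogue in your outline.

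Put differently, the quantity your plan would bound---worst-case $L1$ change in the query output over dependent neighbors---is not what dependent sensitivity certifies. Dependent sensitivity is the correct noise scale for the $e^{\epsilon}$ indistinguishability guarantee under the correlation model $\mathcal{R}$, which is a statement about mechanism output distributions, weaker than a pointwise output-distance bound. To repair the proof you would need to abandon the norm decomposition and prove the privacy inequality $\Pr[\mathcal{M}(D)\in S]\leq e^{\epsilon}\Pr[\mathcal{M}(D')\in S]$ directly by integrating over the dependent tuples' conditional distributions, using the definition of $\rho_{ij}$ to control how much those distributions shift when tuple $i$'s value changes. A smaller issue: the lemma asserts equality $DS^Q=\max_i DS_i^Q$, while your argument, even if the middle step were fixed, establishes only an upper bound; tightness would require exhibiting a neighboring pair (or, properly, a correlation structure and tuple modification) achieving the maximum.
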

Here, $i$ refers to a tuple index, and $DS_i^Q = \sum_i \rho_{i,j}
\Delta Q_j$ is the dependent sensitivity for the $i$th tuple. If we
can show that the \ourmech correctly calculates $DS^Q$ and adds
Laplace noise scaled to that sensitivity, then it follows that the
\ourmech satisfies dependent differential privacy.
\begin{theorem}
  If the learned Bayesian network $(V, E, P)$ accurately represents the underlying distribution of the dataset $D$, then the Tabular-DDP mechanism (Algorithm~\ref{alg:mechanism}) satisfies
  $\epsilon$-dependent differential privacy.
\end{theorem}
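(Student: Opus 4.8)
The plan is to reduce the multi-column \ourmech to the single-column, inter-tuple setting of Liu et al.\ and then invoke their Lemma~\ref{lem:dep_sens} together with the Laplace mechanism and sequential composition. First I would make the single-column view explicit: concatenating the $k$ columns of a chunk into one column turns each original cell into a tuple, and the learned Bayesian network over columns becomes a dependence relationship $\mathcal{R}$ between these tuples. A dependent-neighboring modification---changing one tuple value---corresponds to altering a single cell, whose influence propagates only to the cells it is correlated with through $\mathcal{R}$; this is exactly the regime for which dependent sensitivity is defined. Since releasing the per-column query answers $Q(X_i)$ in the tabular view is literally the same output as releasing the corresponding answers in the single-column view, it suffices to prove that the single-column release is private.

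Next I would verify that the mechanism computes a sound sensitivity. Under the hypothesis that $(V,E,P)$ accurately represents the distribution, the quantity $\rho_{i,j}$ on line~4 is the maximum log-ratio between the conditional laws of $X_j$ under two different values of $X_i$, which is precisely the dependence coefficient of Liu et al.\ between tuples $i$ and $j$; note in particular that taking the two conditioning values equal shows $\rho_{i,j}\ge 0$. Lemma~\ref{lem:dep_sens} gives the true dependent sensitivity as $DS^Q=\max_i DS_i^Q$ with $DS_i^Q=\sum_{j}\rho_{i,j}\Delta Q_j$, whereas the algorithm uses $DS=\sum_{(i,j)\in E}\rho_{i,j}$. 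Because every $DS_i^Q$ is a partial sum of the non-negative terms appearing in $DS$ (and the per-column histogram query has $\Delta Q_j\le 1$), we get
\[ DS \;=\; \sum_{(i,j)\in E}\rho_{i,j} \;\ge\; \max_i\, \sum_{j}\rho_{i,j}\Delta Q_j \;=\; DS^Q. \]
Thus the mechanism calibrates noise to an over-approximation of the true dependent sensitivity, which is sound because adding more noise can only strengthen the privacy guarantee.

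Finally I would assemble the per-chunk and overall guarantees. With sensitivity at most $DS$ and Laplace scale $nDS/(k\epsilon)$, the Laplace-mechanism theorem, read in the dependent-sensitivity form of Liu et al., gives each chunk $\tfrac{k}{n}\epsilon$-dependent differential privacy, i.e.\ about $\epsilon/\lfloor n/k\rfloor$ per chunk. Summing over the $\lfloor n/k\rfloor$ chunks via the sequential composition of dependent differential privacy yields the claimed total of $\epsilon$. Sequential (not parallel) composition is the correct tool here because a single respondent's row spans every chunk, so the chunks are not computed on disjoint data.

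The hard part is the middle step: precisely reconciling the algorithm's sum-over-edges $DS$ with Liu et al.'s max-over-tuples $DS^Q$, and clarifying the role of $\Delta Q_j$ and of the accuracy hypothesis. The over-approximation argument leans on non-negativity of the $\rho_{i,j}$ and on the bounded per-tuple sensitivity of the histogram query; and the entire guarantee is contingent on the learned network matching the true correlations, since an inaccurate model could make the computed $\rho_{i,j}$ underestimate the real dependence coefficients and thereby invalidate Lemma~\ref{lem:dep_sens}---which is exactly why the theorem assumes an accurate network.
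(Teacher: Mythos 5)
Your proposal is correct and follows essentially the same route as the paper's proof: the implicit single-column reduction to Liu et al.'s inter-tuple setting, the over-approximation $DS = \sum_{(i,j)\in E}\rho_{i,j} \ge \max_i \sum_j \rho_{i,j}\,\Delta Q_j = DS^Q$ using $\Delta Q_j = 1$ for the counting queries, Laplace noise at scale $\frac{nDS}{k\epsilon}$ giving $\frac{k\epsilon}{n}$-dependent differential privacy per chunk, and sequential composition over the $\lfloor n/k \rfloor$ chunks to reach $\epsilon$ in total. Your explicit note that $\rho_{i,j}\ge 0$ (which justifies the sum-dominates-max step) is a detail the paper leaves implicit, but it does not change the argument.
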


\begin{proof}
  We show that Algorithm~\ref{alg:mechanism} satisfies $\frac{k\epsilon}{n}$-dependent differential privacy for each chunk of columns. By sequential composition, if there are at most $\frac{n}{k}$ chunks, then the mechanism has a total privacy cost of $\epsilon$-dependent differential privacy.
  For each chunk of columns, we have the following for the sensitivity
  calculated by Algorithm~\ref{alg:mechanism}, leveraging the fact
  that our counting queries have sensitivity $\Delta Q_j = 1$:
  \begin{align*}
    DS = & \sum_{i, j} \rho_{i,j}\\
    \geq & \max_i \sum_j \rho_{i,j} \Delta Q_j\\
    = & DS^Q
  \end{align*}
  By Lemma~\ref{lem:dep_sens}, noise scaled to $\frac{DS}{\epsilon}$
  will satisfy $\epsilon$-dependent differential privacy.
  Algorithm~\ref{alg:mechanism} adds Laplace noise scaled to:
  \[\frac{n DS}{k \epsilon}\]
  which satisfies $\frac{k\epsilon}{n}$-dependent differential
  privacy, as required.

\end{proof}

\paragraph{Utility.}
The same accuracy bounds proven by Liu et al.~\cite{liu2016dependence}
also apply to the \ourmech. These results bound the error for any individual column of the statistics returned by the mechanism.
\begin{definition}[$(\alpha, \beta)$-accuracy~\cite{dwork2014algorithmic, liu2016dependence}]
  A randomization algorithm $\mathcal{A}$ satisfies $(\alpha,
  \beta)$-accuracy for a query function $Q$ if:
  \[\Pr[\max_D \lvert \mathcal{A}(D) - Q(D) \rvert > \alpha] \leq  \beta\]
\end{definition}
\begin{lemma}
  The \ourmech provides $(\alpha, \beta)$-accuracy for each column of
  the dataset $D$, for $\beta = \exp \Big(\frac{-\epsilon \alpha}{DS^Q} \Big)$.
  \label{lem:accuracy_column}
\end{lemma}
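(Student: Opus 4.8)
The plan is to reduce the accuracy claim for an entire column to a tail bound on a single Laplace random variable. The \ourmech releases each column as $R_i = Q(X_i) + \eta_i$, where $\eta_i$ is drawn independently from a Laplace distribution and $Q(X_i)$ is deterministic given $D$. Consequently the error $|\mathcal{A}(D) - Q(D)|$ for any fixed column is exactly $|\eta_i|$, and the whole problem collapses to controlling the magnitude of a single zero-mean Laplace variate. The Bayesian-network structure, the chunking, and the composition across chunks all enter only through the \emph{scale} of that variate, so once the scale is pinned down the bound is a one-line computation.

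First I would recall the closed form of the Laplace tail: for $\eta \sim Lap(b)$ the density is symmetric with $\Pr[\eta > t] = \tfrac{1}{2}e^{-t/b}$ for $t \geq 0$, hence $\Pr[|\eta| > \alpha] = e^{-\alpha/b}$. Next I would identify the relevant scale. Calibrating the additive noise to the dependent sensitivity $DS^Q$ at privacy level $\epsilon$ means $b = DS^Q/\epsilon$; this is precisely the per-release scale analyzed by Liu et al.~and the one the \ourmech uses once the chunk budget is accounted for. Substituting into the tail bound gives
\[
\Pr[|\eta_i| > \alpha] = \exp\!\Big(\frac{-\alpha}{DS^Q/\epsilon}\Big) = \exp\!\Big(\frac{-\epsilon\alpha}{DS^Q}\Big) = \beta,
\]
which is exactly the claimed failure probability.

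The remaining step is to discharge the $\max_D$ appearing in the definition of $(\alpha,\beta)$-accuracy. The key observation is that the tail probability $e^{-\alpha/b}$ depends only on the noise distribution and not on the realized query answer $Q(X_i)$: the additive noise is drawn identically no matter which database is instantiated, so $\Pr[|\eta_i| > \alpha] \le \beta$ holds uniformly over all $D$ and in particular survives the worst-case maximization. I expect this bookkeeping, rather than any probabilistic difficulty, to be the only real obstacle. One must argue that the scale actually appearing in Algorithm~\ref{alg:mechanism}---which carries the extra $n/k$ factor from splitting the budget across chunks and uses $DS \ge DS^Q$---can be taken to be $DS^Q/\epsilon$ in the idealized per-column statement the lemma makes, thereby matching the accuracy guarantee of Liu et al. Everything else reduces to the single Laplace calculation above.
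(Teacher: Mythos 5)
Your proof is correct and is essentially the paper's own: the paper discharges this lemma by citing Liu et al., Theorem 10, whose content is exactly the single-Laplace tail calculation you carry out ($\Pr[|\eta| > \alpha] = e^{-\alpha/b}$ with $b = DS^Q/\epsilon$), together with the observation that the bound is uniform in $D$ because the noise distribution does not depend on the realized query answer. Your closing caveat is also well placed: Algorithm~\ref{alg:mechanism} actually adds noise at scale $\frac{n\,DS}{k\epsilon}$ rather than $DS^Q/\epsilon$, so the lemma as stated is an idealized per-column guarantee inherited from Liu et al.\ rather than a literal property of the chunked mechanism --- a discrepancy the paper's one-line citation proof silently glosses over.
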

\begin{proof}
  Follows directly from Liu et al.~\cite{liu2016dependence}, Theorem 10.
\end{proof}

In addition, we can extend the utility bounds from Liu et
al.~\cite{liu2016dependence} to bound $L1$ error for \ourmech. We leverage a result on the sum of Laplace samples from Chan et al.~\cite{chan2011private} (which adapts the Chernoff bound). We define $L_1$ accuracy in the same way as $(\alpha, \beta)$ accuracy, but using the $L_1$ error.
\begin{lemma}[Sum of independent Laplace samples ({\cite[Lemma 2.8]{chan2011private}})]
Suppose $\gamma_i$'s are independent random variables, where each $\gamma_i$ has Laplace distribution $\mathsf{Lap}(b_i)$. Suppose $Y := \sum_i \gamma_i$, and $b_M := \max_i b_i$. Let $\nu \geq \sqrt{\sum_i b_i^2}$ and $0 < \lambda < \frac{2\sqrt{2}\nu^2}{b_M}$. Then:
\[\Pr[Y > \lambda] \leq \exp\Big(\frac{-\lambda^2}{8\nu^2}\Big)\]
\label{lem:chernoff}
\end{lemma}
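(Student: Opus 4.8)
The plan is to establish this tail bound by the exponential-moment (Chernoff) method, specialized to the Laplace distribution. First I would fix a free parameter $t > 0$ and apply Markov's inequality to $e^{tY}$; since the $\gamma_i$ are independent this gives
\[
  \Pr[Y > \lambda] = \Pr[e^{tY} > e^{t\lambda}] \le e^{-t\lambda} \prod_i \mathbb{E}\bigl[e^{t\gamma_i}\bigr].
\]
The next ingredient is the moment generating function of a centered Laplace variable: for $\gamma_i \sim \mathsf{Lap}(b_i)$ and $|t| < 1/b_i$ one has $\mathbb{E}[e^{t\gamma_i}] = (1 - b_i^2 t^2)^{-1}$. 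Substituting and taking logarithms reduces the problem to controlling $\sum_i -\log(1 - b_i^2 t^2)$.

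To bound that sum I would invoke the elementary inequality $-\log(1 - u) \le 2u$, valid for $u \in [0, 1/2]$ (the two sides agree at $u = 0$, and the difference $-\log(1-u) - 2u$ is nonincreasing on the interval since its derivative $\frac{1}{1-u} - 2$ is nonpositive there). Applying it with $u = b_i^2 t^2$---legitimate provided $t \le 1/(\sqrt{2}\, b_M)$, so that every $b_i^2 t^2 \le 1/2$---gives $\sum_i -\log(1 - b_i^2 t^2) \le 2 t^2 \sum_i b_i^2 \le 2 t^2 \nu^2$, where the last step uses the hypothesis $\nu^2 \ge \sum_i b_i^2$. Combining with the Chernoff bound yields the intermediate estimate
\[
  \Pr[Y > \lambda] \le \exp\bigl(-t\lambda + 2 t^2 \nu^2\bigr),
\]
valid for every $t$ in the admissible range $0 < t \le 1/(\sqrt{2}\, b_M)$.

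Finally I would optimize the quadratic exponent over $t$. It is minimized at $t^\star = \lambda/(4\nu^2)$, where the exponent equals $-\lambda^2/(8\nu^2)$---exactly the claimed bound. The one step that requires care, and the reason the hypothesis constrains $\lambda$, is verifying that this minimizer actually lies in the admissible range where the MGF bound was applied: the requirement $t^\star \le 1/(\sqrt{2}\, b_M)$ rearranges precisely to $\lambda \le 2\sqrt{2}\,\nu^2/b_M$, which is the stated upper bound on $\lambda$. Thus the main (and essentially only) subtlety is matching this feasibility constraint on the exponential tilt to the hypothesis on $\lambda$; the remainder is the routine Chernoff computation. Since the statement is cited verbatim as Lemma 2.8 of Chan et al., one could alternatively defer to their proof, but the self-contained argument above is short.
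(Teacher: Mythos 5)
Your proof is correct: the paper does not prove this lemma at all---it imports it verbatim from Chan et al.~\cite[Lemma 2.8]{chan2011private}---and your Chernoff argument (Markov on $e^{tY}$, the Laplace MGF $(1-b_i^2t^2)^{-1}$, the bound $-\log(1-u)\le 2u$ on $[0,1/2]$, and optimization at $t^\star=\lambda/(4\nu^2)$) is exactly the standard proof given in that cited source. Your identification of the feasibility constraint $t^\star \le 1/(\sqrt{2}\,b_M)$ as the origin of the hypothesis $\lambda < 2\sqrt{2}\,\nu^2/b_M$ is precisely right, so there is nothing to fix.
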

\begin{definition}[$L_1$-accuracy]
  A randomization algorithm $\mathcal{A}$ satisfies $L_1 (\alpha,
  \beta)$-accuracy for a query function $Q$ if:
  \[\Pr[\max_D \lVert \mathcal{A}(D) - Q(D) \rVert_1 > \alpha] \leq \beta\]
\end{definition}
%
\begin{theorem}
  The \ourmech provides $L_1 (\alpha, \beta)$-accuracy for $\beta = \exp\Big(\frac{- \sqrt{2}\epsilon\alpha}{4 DS}\Big)$.
\end{theorem}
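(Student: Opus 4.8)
The plan is to write the $L_1$ error of the \ourmech as a sum of independent Laplace contributions, one per released column, and then bound the tail of that sum using the Chernoff-type inequality of Chan et al.\ (Lemma~\ref{lem:chernoff}). Concretely, for a block of $m$ released columns the mechanism adds to each an independent term $\eta_i \sim \mathsf{Lap}(b)$ with common scale $b = DS/\epsilon$, so that $\lVert \mathcal{A}(D) - Q(D)\rVert_1 = \sum_{i=1}^m \lvert \eta_i\rvert$. I would invoke Lemma~\ref{lem:chernoff} with $b_i = b$ for all $i$, so that $b_M = b = DS/\epsilon$ and $\sum_i b_i^2 = m b^2$, and aim to control the probability that this sum exceeds $\alpha$.

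The crux of obtaining the exact exponent $\frac{\sqrt{2}\epsilon\alpha}{4 DS}$ is the choice of the free parameter $\nu$. Rather than taking the smallest admissible value $\nu = \sqrt{\sum_i b_i^2}$, I would pick $\nu$ so as to (just) saturate the upper constraint on $\lambda$, i.e.\ set $\nu^2 = \frac{\lambda b_M}{2\sqrt{2}}$ with $\lambda = \alpha$. Substituting this into the exponent of Lemma~\ref{lem:chernoff} collapses the quadratic bound to a linear one:
\[
\frac{\lambda^2}{8\nu^2} = \frac{\lambda^2}{8 \cdot \frac{\lambda b_M}{2\sqrt{2}}} = \frac{\sqrt{2}\,\lambda}{4 b_M} = \frac{\sqrt{2}\,\epsilon\,\alpha}{4\,DS},
\]
which is precisely the claimed rate. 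This substitution is admissible only when $\nu^2 = \frac{\alpha b_M}{2\sqrt{2}} \geq \sum_i b_i^2 = m b_M^2$, i.e.\ when $\alpha \geq 2\sqrt{2}\,m\,DS/\epsilon$; I would either restrict the statement to this large-$\alpha$ regime or take $\nu$ infinitesimally above the boundary and appeal to continuity of the bound.

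The step I expect to be the main obstacle is reconciling Lemma~\ref{lem:chernoff}, which bounds the \emph{signed} sum $Y = \sum_i \gamma_i$, with the $L_1$ error $\sum_i \lvert\eta_i\rvert$, a sum of \emph{magnitudes}. These are not interchangeable: $\sum_i \lvert\eta_i\rvert \geq \lvert\sum_i \eta_i\rvert$, and each $\lvert\eta_i\rvert$ is exponential rather than Laplace, with a strictly heavier upper tail (its moment generating function $\frac{1}{1-bt}$ dominates the Laplace $\frac{1}{1-b^2t^2}$ for $0 < t < 1/b$), so one cannot simply quote the signed-sum bound. To close this gap I would re-derive the Chan-style moment-generating-function argument for the one-sided magnitudes $\lvert\eta_i\rvert \sim \mathrm{Exp}(1/b)$ and check that the same boundary substitution still yields an upper bound of the stated form. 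This is plausible precisely because the target decay rate $\frac{\sqrt{2}}{4b} \approx 0.35/b$ is considerably smaller than the true exponential rate $1/b$ of the magnitude sum, leaving slack to absorb the $m$-dependent prefactors in the resulting Gamma tail. Confirming that this slack suffices across the relevant range of $\alpha$ is the technical heart of the argument; the remaining steps are the routine substitutions above.
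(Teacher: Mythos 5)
Your proposal follows essentially the same skeleton as the paper's proof: both invoke Lemma~\ref{lem:chernoff} and obtain the linear exponent $\frac{\sqrt{2}\epsilon\alpha}{4DS}$ by collapsing the quadratic tail bound at the boundary $\lambda = \frac{2\sqrt{2}\nu^2}{b_M}$. The difference is in how the boundary is handled, and here your version is sounder. The paper fixes $\nu = \sqrt{n}\,DS/\epsilon$ and $\lambda = \frac{2\sqrt{2}\epsilon\nu^2}{DS}$, then in its displayed chain substitutes $\lambda$ for one factor of $\alpha$ --- a step literally valid only at the single point $\alpha = \lambda = 2\sqrt{2}\,n\,DS/\epsilon$; for smaller $\alpha$ the lemma's bound $\exp(-\alpha^2/8\nu^2)$ is \emph{weaker} than the claimed $\beta$, so the theorem as stated does not follow there. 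Your parameterization --- fixing $\lambda = \alpha$ and choosing $\nu^2 = \alpha b_M / (2\sqrt{2})$ --- turns the same algebra into a legitimate derivation for every $\alpha \geq 2\sqrt{2}\, n\, DS/\epsilon$, and it makes explicit the large-$\alpha$ restriction that the paper leaves implicit. Your second objection is also genuine: Lemma~\ref{lem:chernoff} bounds the signed sum $\sum_i \eta_i$, whereas the $L_1$ error is $\sum_i \lvert \eta_i \rvert$, and the paper applies the lemma to the latter without comment. Your proposed repair goes through: with $b = DS/\epsilon$ and Chernoff parameter $t = 1/(2b)$, the Gamma-distributed magnitude sum satisfies $\Pr[\sum_i \lvert\eta_i\rvert > \alpha] \leq 2^n e^{-\alpha/(2b)}$, which is at most $e^{-\sqrt{2}\alpha/(4b)}$ once $\alpha \geq \frac{4\ln 2}{2-\sqrt{2}}\, n b \approx 4.73\, n b$, so the stated $\beta$ holds in (a slightly shrunken version of) the same large-$\alpha$ regime you identified. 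In short: same approach and same key lemma as the paper, but your treatment patches two real gaps in the paper's own argument --- the implicit restriction on $\alpha$ and the signed-sum-versus-magnitude-sum mismatch.
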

To prove the accuracy bound, we consider that the $L_1$ error introduced by the mechanism is a result \emph{only} of the noise samples added to each result in line 8 of Algorithm~\ref{alg:mechanism} (i.e. $\lVert \mathcal{A}(D) - Q(D) \rVert_1$ is exactly equal to the sum of the noise samples added by the mechanism). Each of these noise samples is conditionally independent from the others, so Lemma~\ref{lem:chernoff} applies, and gives an upper bound on the $L_1$ error resulting from the noise.
\begin{proof}
Set $\lambda = \frac{2\epsilon\sqrt{2}\nu^2}{DS}$ and $\nu = \sqrt{n}\frac{DS}{\epsilon}$. By Lemma~\ref{lem:chernoff}, we have:
\begin{align*}
\Pr[\max_D \lVert \mathcal{A}(D) - Q(D) \rVert_1 > \alpha] 
  \leq & \exp\Big(\frac{-\alpha^2}{8\nu^2}\Big) \\
  = & \exp\Big(\frac{-\alpha\alpha}{8\nu^2}\Big) \\
  = & \exp\Big(\frac{-\alpha\frac{2\epsilon\sqrt{2}\nu^2}{DS}}{8\nu^2}\Big) \\
  = & \exp\Big(\frac{- \sqrt{2}\epsilon\alpha}{4 DS}\Big) \\
\end{align*} 
\end{proof}
Thus the accuracy of \ourmech is independent of the dimensionality of the statistic being released, except as encoded in the dependent sensitivity.

\paragraph{Limitations.}
Our approach has several important limitations. First, as discussed in Section~\ref{sec:priv-cons}, the privacy guarantee is strictly weaker than standard $\epsilon$-differential privacy, and additional unexpected privacy failures could occur if the learned Bayesian networks do not actually correspond to the underlying population distribution. Second, the \ourmech is based on Laplace noise, and uses $L_1$ sensitivity; for high-dimensional data, if $(\epsilon, \delta)$-differential privacy is sufficient, the Gaussian mechanism with $L_2$ sensitivity may produce better accuracy. We hope to extend the \ourmech to Gaussian noise with $L_2$ sensitivity in future work.

\section{Evaluation}

\begin{figure*}
  \centering
  \hspace*{-18mm}\includegraphics[width=1.2\textwidth]{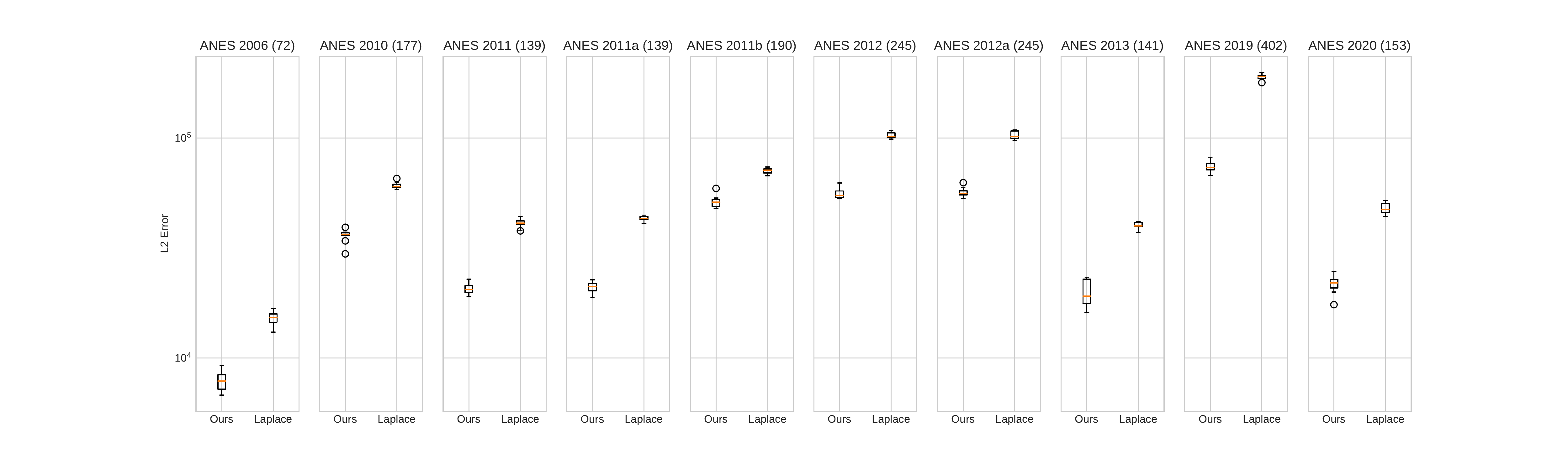}

  \caption{Accuracy results: comparison between the Tabular-DDP
    mechanism and the Laplace mechanism for $\epsilon=0.1$.}
  \label{fig:all_results}
\end{figure*}



\begin{figure}
  \centering
  \begin{tabular}{l cc}
    \hline
    \textbf{ANES Survey} & \textbf{Questions} & \textbf{Responses}\\
    \hline
    Pilot Study 2006 & 72 & 675 \\
    Eval. of Govt. \& Society 2010 & 117 & 1275 \\
    Eval. of Govt. \& Society 2011(a) & 139 & 1315 \\
    Eval. of Govt. \& Society 2011(b) & 139 & 1240 \\
    Eval. of Govt. \& Society 2012 & 190 & 1314 \\
    Pilot Study 2013 & 141 & 1635 \\
    Pilot Study 2019 & 402 & 3165 \\
    Pilot Study 2020 & 153 & 3080\\
    \hline
  \end{tabular}
  \caption{Evaluation Datasets.}
\end{figure}

\begin{figure}
  \centering
  \includegraphics[width=.45\textwidth]{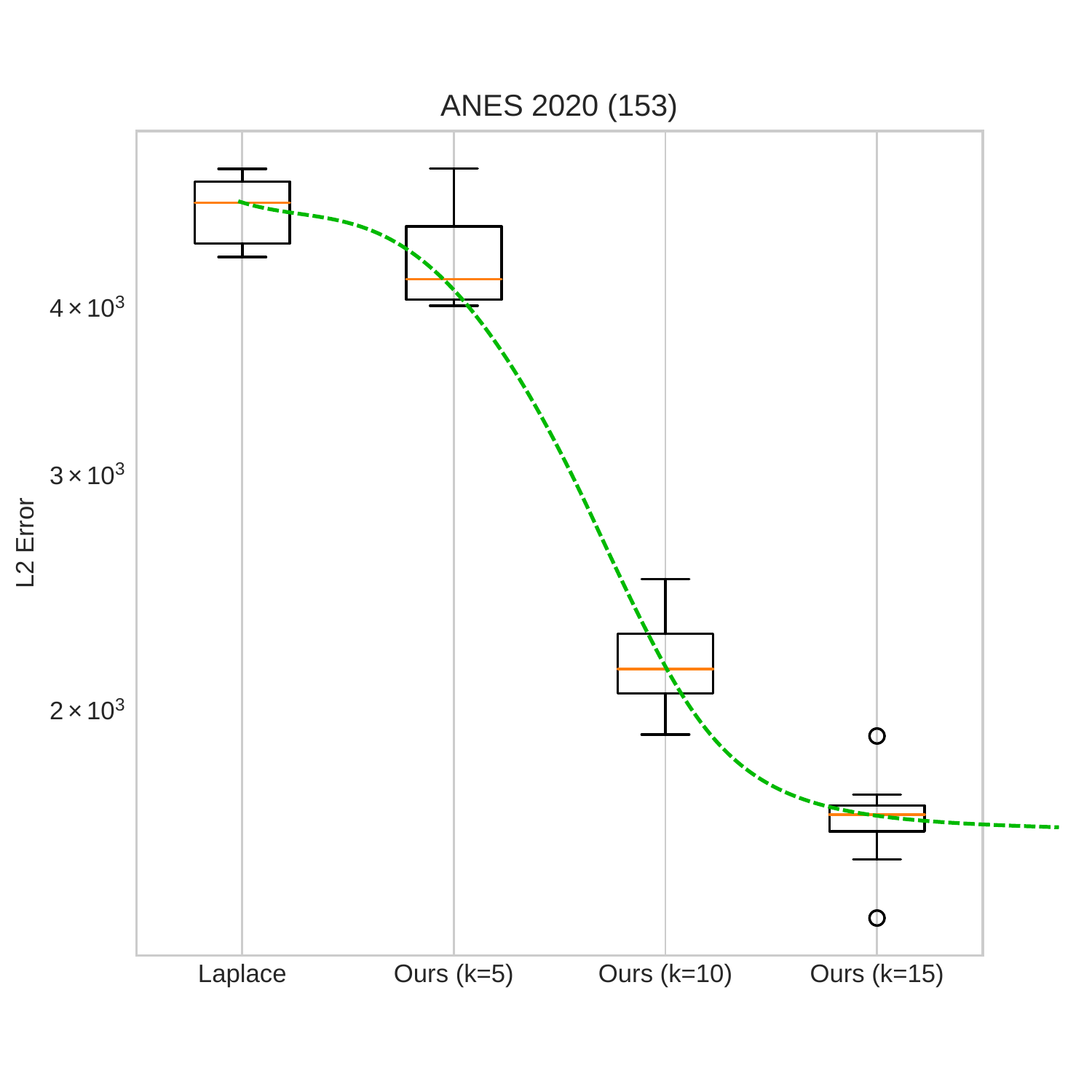}
  \caption{Accuracy results: effect of the chunk size $k$ for $\epsilon=1.0$. The green dashed line is a rough representation of the trend in $k$'s effect on accuracy.}
  \label{fig:k_results_results}
\end{figure}

\begin{figure*}
  \centering
  \begin{tabular}{c c}
    \includegraphics[width=.45\textwidth]{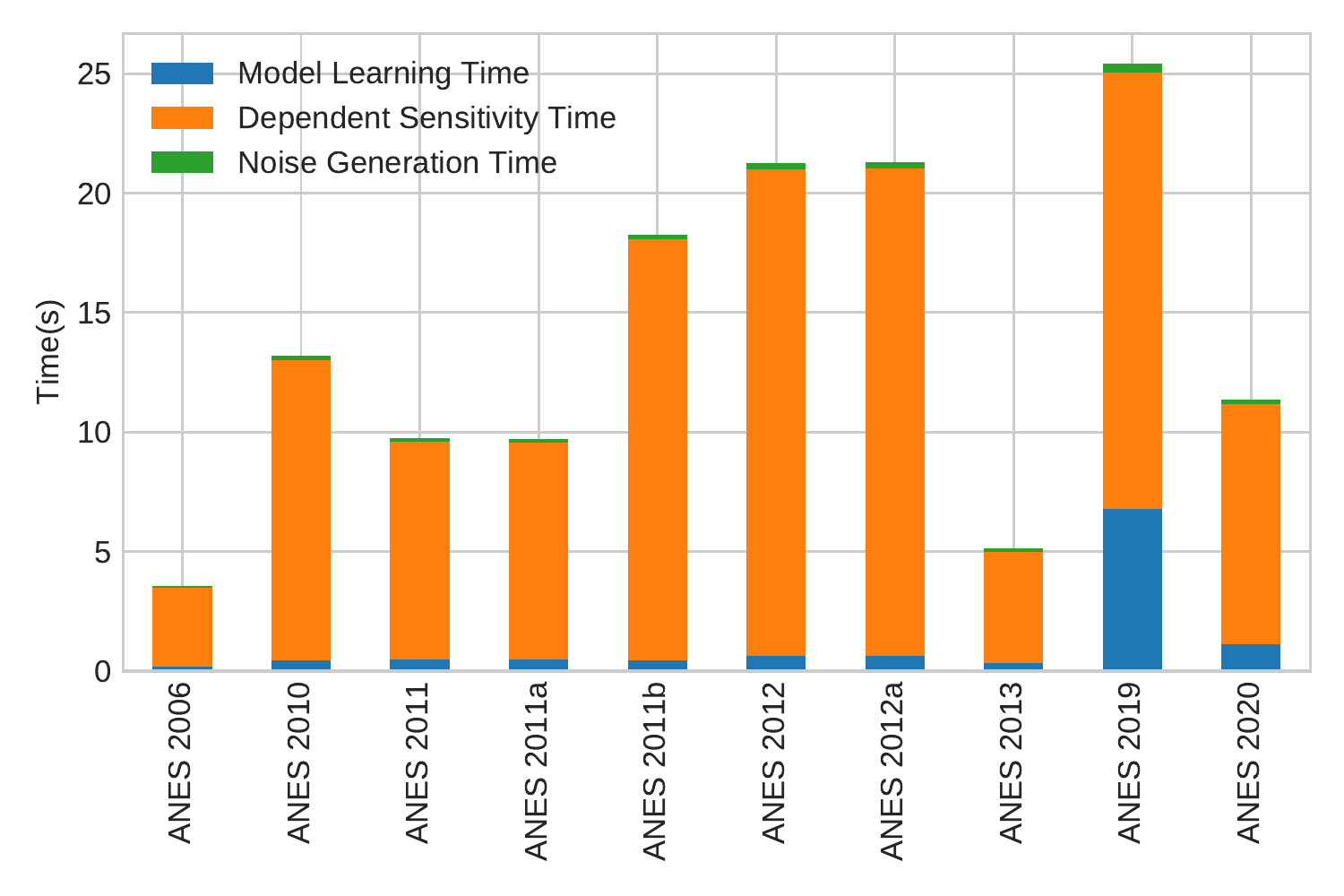}
    &
    \includegraphics[width=.45\textwidth]{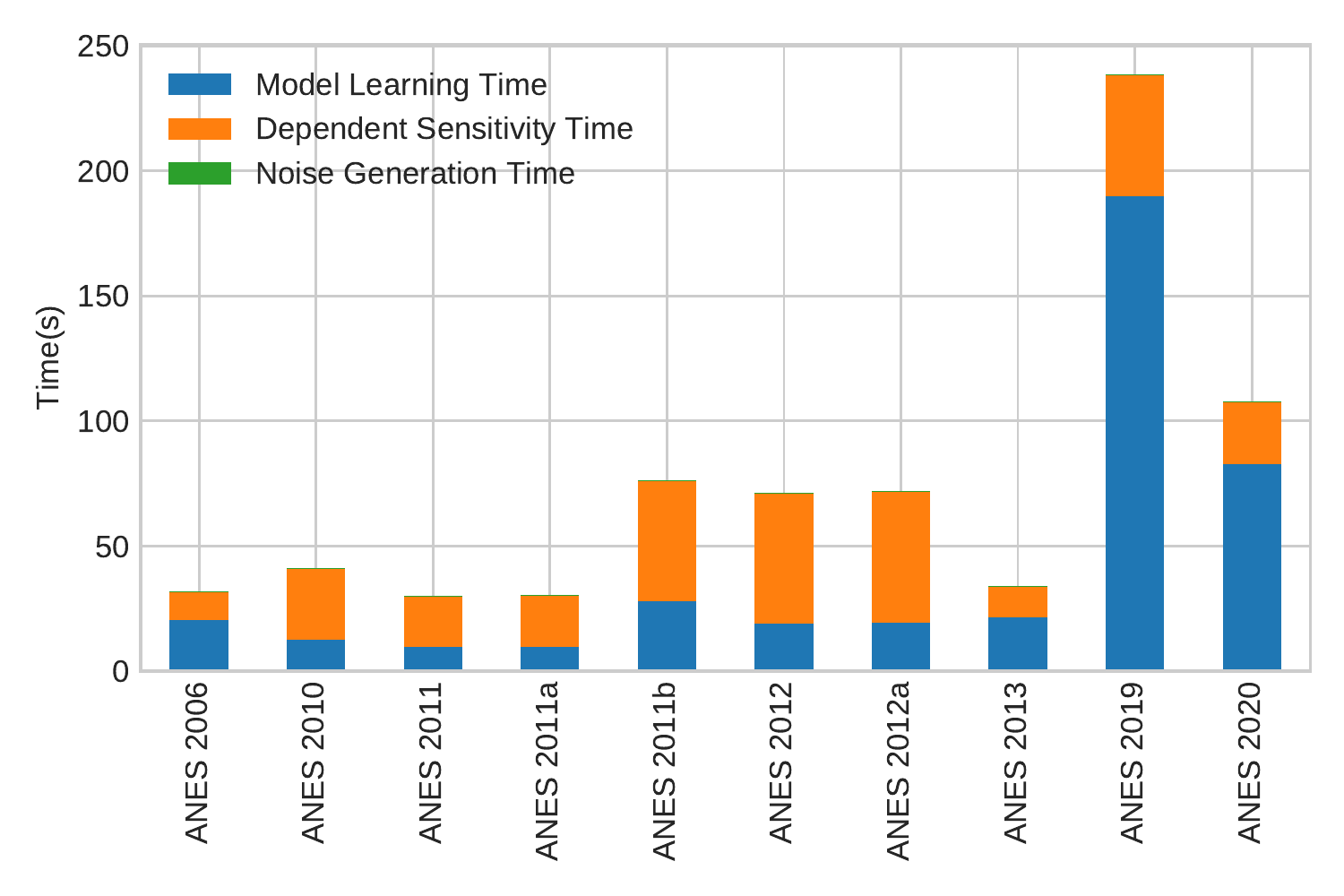}
    \\
    $k=10$ & $k=15$\\
  \end{tabular}
  \caption{Scalability results: per-component running time for the
    Tabular-DDP algorithm on each dataset. Note the difference in $y$
    axis scales. Model learning time increases exponentially with the
    value of $k$, and dominates for larger values of $k$.}
  \label{fig:scalability}
\end{figure*}

Our empirical evaluation seeks to answer two questions:
\begin{enumerate}
\item \textbf{Q1: Accuracy.} How does the accuracy of the \ourmech compare
  to the Laplace mechanism?
\item \textbf{Q2: Scalability.} How does the size and dimensionality
  of the dataset impact the running time of the \ourmech?
\end{enumerate}

To answer the first question, we evaluated the accuracy of the \ourmech for computing summary statistics for 9 survey datasets released by the American National Election Studies. The results suggest that \ourmech can significantly increase accuracy over the Laplace mechanism for these real-world datasets. To answer the second question, we measured running time for each component of \ourmech; the results suggest that \ourmech scales to realistic datasets, and that the primary scalability challenge comes from learning the Bayesian network from the data.

\paragraph{Datasets.}
Our datasets were drawn primarily from the American National Election Studies (ANES) database. Each dataset included columns that corresponded to the answers for questions in the metadata datasheet. Questions that were multiple choice were frequently designed by indexed characters (for example, in the dataset ANES 2011, multiple choice question responses are represented as sequential columns ("c3c1", "c3c2", where each possible answer is indicated by a number and choice, or otherwise, for that column "-1. Inapplicable, legitimate skip" ), and if there was branch logic for indexed questions, the numeric values were sentinel values. 

\paragraph{Methodology.}
We compared \ourmech to the Laplace mechanism, which provides $\epsilon$-differential privacy and assumes that attributes in each individual record may be completely correlated with one another. To simulate the computation of summary statistics for each survey, we ran a histogram query on each column of the survey results (i.e. we queried the count of each response category for each question of the survey).

\subsection{Experiment 1: Accuracy}


\paragraph{Experiment Setup.}
Our first experiment examines the accuracy of the \ourmech by comparing it to the standard Laplace mechanism. We ran 100 trials for each experiment, and report $L2$ error. We used $\epsilon \in \{0.1, 1, 10\}$ for both mechanisms.

\paragraph{Results.}
The results for $\epsilon = 0.1$ appear in Figure~\ref{fig:all_results}. Additional results for other values of $\epsilon$ appear in Figure~\ref{fig:all_results_appendix} in the Appendix, and are consistent with these. We set $k =10$ (i.e. 10 columns per ``chunk'' of the dataset, so that each Bayesian network covers 10 columns). The results show that the \ourmech consistently outperforms the Laplace mechanism in terms of accuracy at a given level of privacy.

Figure~\ref{fig:k_results_results} shows accuracy results for various values of the chunk size $k$. The results suggest that the accuracy advantage of the \ourmech over the Laplace mechanism increases as $k$ increases; when $k=5$, for example, the accuracy advantage of the \ourmech is fairly small, and it is much larger when $k=15$. These results match our expectations about the \ourmech: as $k$ increases, the \ourmech takes better advantage of the partiality of correlations between attributes.

\subsection{Experiment 2: Scalability}

\paragraph{Experiment Setup.}
Our second experiment measures running time of \ourmech to determine whether or not it can scale to realistic datasets. We instrumented our implementation to separately measure the running time of (1) learning the Bayesian network from the data, (2) calculating the dependent sensitivity, and (3) generating the noise samples themselves. We ran \ourmech on the same datasets and recorded the running time of each component; we performed 5 trials and report the average running time of each component. We set $k =10$ (i.e. 10 columns per ``chunk'' of the dataset, so that each Bayesian network covers 10 columns).

\paragraph{Results.}
The results appear in Figure~\ref{fig:scalability}, and suggest that \ourmech is capable of scaling to realistic datasets like the ANES surveys we considered. The running time for \ourmech in this experiment is dominated by the time to calculate dependent sensitivity based on the Bayesian network associated with the target columns. Running time was higher for surveys with more questions (e.g. the ANES 2012 and 2019 surveys, which had more columns than other datasets). For all of the datasets we considered, when $k=10$, \ourmech was able to compute summary statistics for all columns in about 10 seconds or less.

For small values of $k$, the running time is dominated by the time taken to calculate dependent sensitivity. However, as $k$ increases, the model learning time quickly dominates the total time, due to the fundamental scalability challenges of learning models over many attributes. The running time for the ANES 2020 survey increases 10x---from about 10 seconds to over 100 seconds---when $k$ increases from 10 to 15.


\subsection{Discussion}

Based on the results of our experiments, we answer the original research questions as follows. \textbf{(1)}: for the survey data we studied, the accuracy of the \ourmech improves on the Laplace mechanism---when $k \geq 10$, the improvement is often 2x or more. \textbf{(2)}: the \ourmech is slower than the Laplace mechanism, but for $k \leq 10$, it scales easily to realistic survey datasets with hundreds of columns and thousands of responses.

Our experimental results clearly demonstrate the tradeoff between running time and accuracy in the \ourmech: accuracy increases with larger values of $k$, but running time also increases (exponentially!). Fortunately, the results suggest that significant accuracy gains can be achieved with small enough values of $k$ that running time is reasonable. More scalable approaches for learning Bayesian networks may allow increasing $k$ further, and thus improving accuracy even more.

\section{Related Work}

A significant amount of previous work has considered the privacy implications of correlations within sensitive data. The most general framework for formalizing privacy while taking correlations into account is Pufferfish privacy~\cite{song2017pufferfish}, introduced earlier. The Pufferfish framework allows specifying any model of correlations in the underlying population as a probability distribution over possible datasets. Dependent differential privacy~\cite{liu2016dependence} can be defined as a particular variant of Pufferfish privacy. Our work builds on these definitions, providing a new mechanism that satisfies dependent differential privacy (and thus, Pufferfish privacy).

Many different mechanisms have been proposed for Pufferfish privacy; most are designed for a specific purpose where the correlations in the underlying data are known ahead of time to the analyst and have a specific structure. Many of these consider \emph{temporal} correlations---multiple data records contributed by the same individual over time---and model these correlations using Markov chains. Solutions have been proposed for social media settings~\cite{song2017pufferfish}, smart meter data~\cite{niu2019making, kessler2015deploying}, and web browsing data~\cite{liang2020pufferfish}. In contrast to these approaches, the \ourmech is designed to learn a general model of the underlying correlations from the data itself.

Recent work by Zhang et al.~\cite{zhang2022attribute} proposes Pufferfish mechanisms for \emph{attribute privacy}. This work uses similar techniques to ours, but has a different privacy goal: attribute privacy aims to prevent \emph{population-level} inferences about attributes of the dataset (for example, the distribution of race and gender in the original dataset). Our work, in contrast, aims to prevent inferences about \emph{individuals}.

Previous work has explored the application of differential privacy to protect privacy in survey data~\cite{d2015differential, evans2021statistically, evans2022differentially}. This work has focused on ensuring statistical validity and avoiding bias in the inferences made using differentially private statistics. Previous work in this area has applied well-known differential privacy mechanisms like the Laplace mechanism.

\section{Conclusion}

We have presented the \ourmech, a novel dependent differential privacy mechanism that can improve accuracy over the standard Laplace mechanism for high-dimensional statistics that are not completely correlated. We have shown how to apply the \ourmech to protect privacy in summary statistics for survey data; our experimental results show a significant improvement in accuracy compared to the standard Laplace mechanism in that setting.

\bibliographystyle{plain}
\bibliography{refs}

\section*{Appendix}
See Figure~\ref{fig:all_results_appendix} for additional experimental results.

\begin{figure*}
  \centering
  
  \hspace*{-18mm}\includegraphics[width=1.2\textwidth]{figures/accuracy_results_0.1.pdf}

  \textbf{$\epsilon=0.1$}

  \hspace*{-18mm}\includegraphics[width=1.2\textwidth]{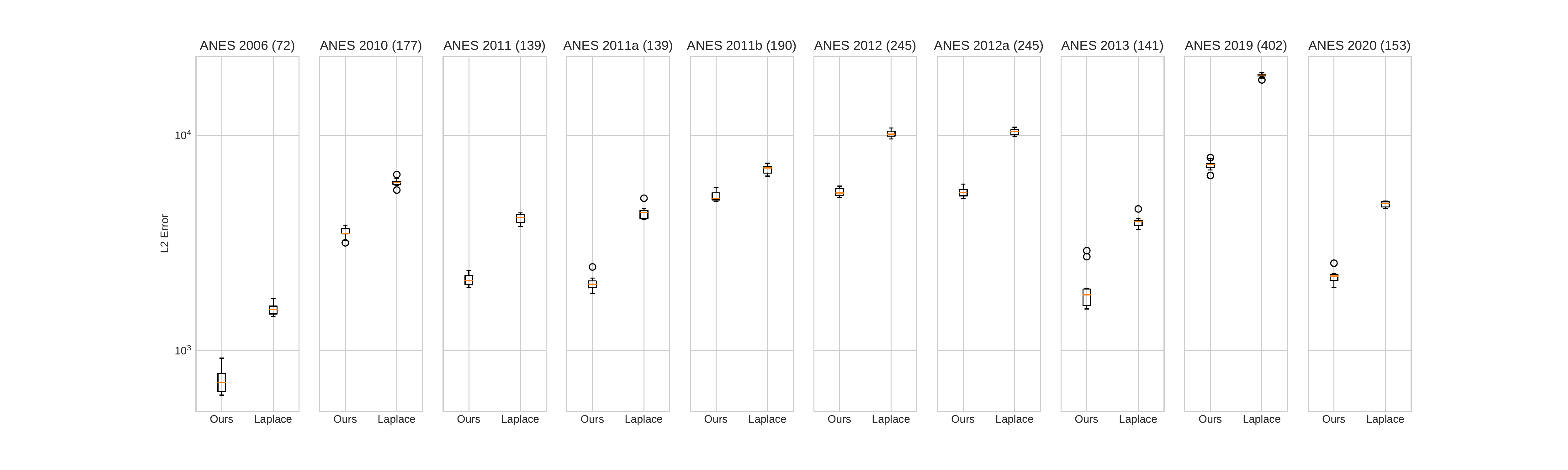}

  \textbf{$\epsilon=1$}

    \hspace*{-18mm}\includegraphics[width=1.2\textwidth]{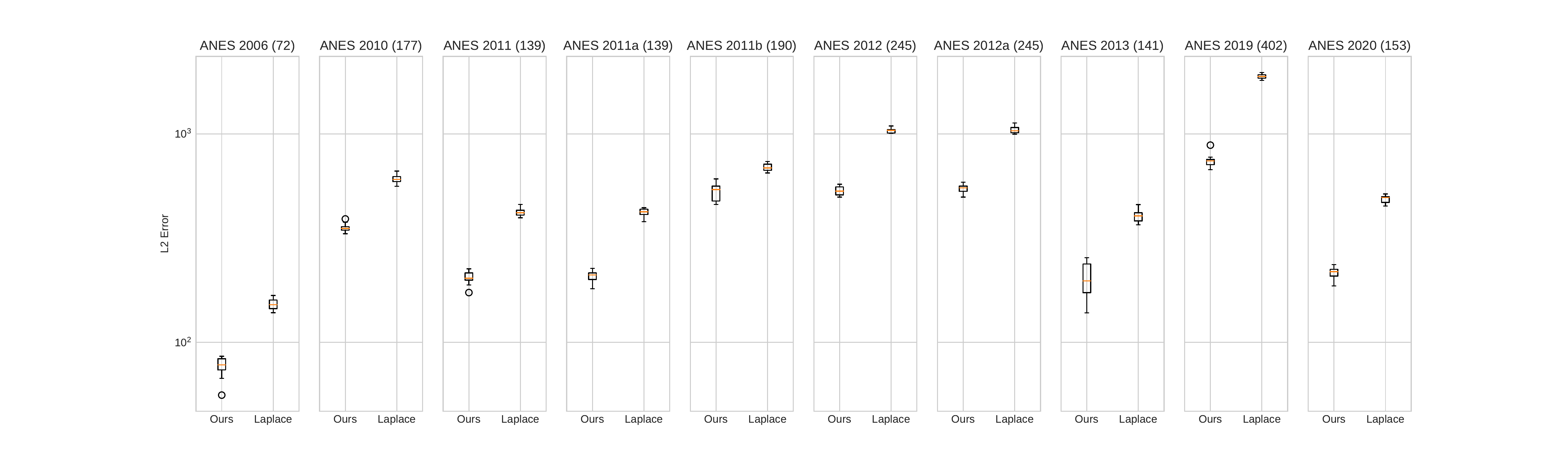}

  \textbf{$\epsilon=10$}

  \caption{Accuracy results: comparison between the Tabular-DDP
    mechanism and the Laplace mechanism for $\epsilon \in \{0.1, 1,
    10\}$.}
  \label{fig:all_results_appendix}
\end{figure*}

\end{document}